\newtheorem{theorem}{Theorem}[section]
\newtheorem{lem}[theorem]{Lemma}
\theoremstyle{definition}
\newtheorem{definition}[theorem]{Definition}
\theoremstyle{remark}
\numberwithin{equation}{section}
\title{Commutativity in Lagrangian and Hamiltonian Mechanics}
\author{Ananth Sridhar and Yuri B. Suris}
\email{asridhar@berkeley.edu, suris@math.tu-berlin.de}
\begin{document}

\maketitle

\begin{abstract}
The main result of this note is a characterization of the Poisson commutativity of Hamilton functions in terms of their principal action functions. 
\end{abstract}
\noindent
\section{Introduction}
Let $ M $ be a finite dimensional manifold, and $ L_1,  L_2 \in C^\infty(TM) $ two non-degenerate Lagrange functions (with locally invertible Legendre transformations). Let $ S_1, S_2$ be their principal action functions (see section \ref{sect cont results} for definitions and details), and  let $ H_1, H_2 \in C^\infty(T^*M) $ be the corresponding Hamilton functions on the phase space $ T^*M $ with its canonical Poisson bracket.
\begin{theorem}\label{thm:thm1}
If the principal action functions satisfy
\begin{align} \label{eq:coms}
\min_{q_1\in M} \Big(S_1( q_0,q_1, t_1) + S_2(q_1, q_{12}, t_2)\Big) = \min_{q_2\in M} \Big(S_2( q_0, q_2, t_2) + S_1(q_2,q_{12}, t_1)\Big)
\end{align}
for all $ (q_0, q_{12})$ from some neighborhood of the diagonal in $M\times M$ and for sufficiently small $ t_1, t_2 > 0 $, then the Hamilton functions Poisson commute, $ \{ H_1, H_2 \} = 0 $.
 \end{theorem}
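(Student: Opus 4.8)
The plan is to read the identity \eqref{eq:coms} as the min-plus (``tropical'') counterpart of the composition law for Hamiltonian flows, and then to differentiate in $t_1$ and $t_2$. Write $\Phi_i^{\,t}:=\Phi_{H_i}^{\,t}$ for the time-$t$ flow of $H_i$ and $\pi\colon T^*M\to M$ for the bundle projection. Recall from Section~\ref{sect cont results} that for small $t>0$ the principal action $S_i(q_0,q_1,t)$ is a generating function of the first type for $\Phi_i^{\,t}$: the equations $p_0=-\partial_{q_0}S_i$, $p_1=\partial_{q_1}S_i$ define $(q_0,p_0)\mapsto(q_1,p_1)=\Phi_i^{\,t}(q_0,p_0)$ (with nondegenerate mixed Hessian $\partial_{q_0}\partial_{q_1}S_i$), and $S_i$ solves the Hamilton--Jacobi equation both in its final argument, $\partial_tS_i+H_i(q_1,\partial_{q_1}S_i)=0$, and in its initial argument, $\partial_tS_i+H_i(q_0,-\partial_{q_0}S_i)=0$ --- both being the statement $\partial_tS_i=-E$, where $E$ is the conserved energy of the connecting extremal. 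For $(q_0,q_{12})$ near the diagonal and $t_1,t_2>0$ small, the function $q_1\mapsto S_1(q_0,q_1,t_1)+S_2(q_1,q_{12},t_2)$ has a unique nondegenerate critical point $q_1^{*}$, depending smoothly on $(q_0,q_{12},t_1,t_2)$; so the minimum on the left of \eqref{eq:coms} is attained at $q_1^{*}$ and equals the critical value, which by the composition rule for generating functions is a first-type generating function of $\Phi_2^{\,t_2}\circ\Phi_1^{\,t_1}$. Likewise the right-hand side of \eqref{eq:coms} generates $\Phi_1^{\,t_1}\circ\Phi_2^{\,t_2}$.

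Let $\mathcal S=\mathcal S(q_0,q_{12},t_1,t_2)$ be the common value of the two sides of \eqref{eq:coms} and put $p:=-\partial_{q_0}\mathcal S$. Differentiating the left-hand side of \eqref{eq:coms} and applying the envelope theorem at $q_1=q_1^{*}$ together with the Hamilton--Jacobi equation for $S_1$ in its initial argument gives $\partial_{t_1}\mathcal S+H_1(q_0,p)=0$; treating the right-hand side the same way, with the Hamilton--Jacobi equation for $S_2$ in its initial argument, gives $\partial_{t_2}\mathcal S+H_2(q_0,p)=0$. The decisive point is that, because the two sides of \eqref{eq:coms} coincide, it is one and the same function $\mathcal S$, with one and the same momentum field $p=-\partial_{q_0}\mathcal S$, that satisfies both
\begin{equation}\label{eq:HJ-plan}
\partial_{t_1}\mathcal S+H_1(q_0,p)=0,\qquad \partial_{t_2}\mathcal S+H_2(q_0,p)=0 .
\end{equation}
Differentiating $\partial_{q_0}\mathcal S=-p$ in $t_i$ and using \eqref{eq:HJ-plan} gives $\partial_{t_i}p=\partial_{q_0}\big(H_i(q_0,p)\big)=\partial_qH_i(q_0,p)+\partial_pH_i(q_0,p)\,\partial_{q_0}p$, where $\partial_{q_0}p=-\partial_{q_0}^2\mathcal S$ is symmetric.

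Now differentiate the first equation of \eqref{eq:HJ-plan} with respect to $t_2$, differentiate the second with respect to $t_1$, substitute the two formulas for $\partial_{t_i}p$, and equate the two resulting expressions for the mixed derivative $\partial_{t_1}\partial_{t_2}\mathcal S$. The terms involving the symmetric matrix $\partial_{q_0}^2\mathcal S$ cancel against each other, and what remains reduces to
\[
\{H_1,H_2\}\big(q_0,\,-\partial_{q_0}\mathcal S(q_0,q_{12},t_1,t_2)\big)=0 .
\]
Finally, let $(\bar q,\bar p)\in T^*M$ be arbitrary and choose $t_1,t_2>0$ small enough (depending on $(\bar q,\bar p)$) that $q_{12}:=\pi\big(\Phi_2^{\,t_2}\Phi_1^{\,t_1}(\bar q,\bar p)\big)$ lies, together with $\bar q$, in the neighborhood of the diagonal on which \eqref{eq:coms} holds --- this is possible because the flows are continuous at $t=0$. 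By the generating-function property, $-\partial_{q_0}\mathcal S(\bar q,q_{12},t_1,t_2)=\bar p$, so the identity above yields $\{H_1,H_2\}(\bar q,\bar p)=0$. Hence $\{H_1,H_2\}\equiv0$ on $T^*M$. (One may also phrase the first half of the argument as: equality of the two first-type generating functions forces $\Phi_2^{\,t_2}\circ\Phi_1^{\,t_1}=\Phi_1^{\,t_1}\circ\Phi_2^{\,t_2}$ for all small $t_1,t_2$, hence $[X_{H_1},X_{H_2}]=0$ and $\{H_1,H_2\}$ is constant; the computation above then identifies the constant as $0$.)

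The bulk of the work lies in the first paragraph --- verifying, for small $t>0$, that $S_i$ is a smooth first-type generating function of $\Phi_i^{\,t}$ with nondegenerate mixed Hessian, that the critical point $q_1^{*}$ in \eqref{eq:coms} is unique, nondegenerate and smooth in the parameters (so that the minimum equals the composition value and the envelope theorem applies), and that both forms of the Hamilton--Jacobi equation hold. These are exactly the facts supplied by the short-time theory of the principal action in Section~\ref{sect cont results}; granting them, the remainder is the short computation with \eqref{eq:HJ-plan} and the cancellation of the Hessian terms, so I expect no further obstacle.
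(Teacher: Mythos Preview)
Your argument is correct. The first half---recognizing both sides of \eqref{eq:coms} as first-type generating functions of $\Phi_2^{t_2}\circ\Phi_1^{t_1}$ and $\Phi_1^{t_1}\circ\Phi_2^{t_2}$, and extracting via the envelope theorem the pair of Hamilton--Jacobi equations \eqref{eq:HJ-plan} with the \emph{same} momentum field $p=-\partial_{q_0}\mathcal S$---coincides with what the paper establishes in its Lemmas~\ref{lem:lem1}--\ref{lem: p p12} (equations \eqref{eq:genHJE 1}, \eqref{eq:genHJE 3} for $S_{12}$ and their analogues for $S_{21}$, together with the observation that the initial momenta agree). Where you diverge is in the final step: the paper argues geometrically, first proving that the flows themselves commute (its Lemma preceding Figure~\ref{fig:paths}) and then using \eqref{eq:genHJE 3}--\eqref{eq:genHJE 4} to show $H_1(q_0,p_0)=H_1(q_{12},p_{12})$, which combined with energy conservation along each individual flow yields that $H_1$ is constant along trajectories of $H_2$. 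You instead treat \eqref{eq:HJ-plan} as an overdetermined PDE system and read off $\{H_1,H_2\}=0$ directly from the compatibility condition $\partial_{t_1}\partial_{t_2}\mathcal S=\partial_{t_2}\partial_{t_1}\mathcal S$, the cancellation of the Hessian terms following from the symmetry of $\partial_{q_0}^2\mathcal S$. Your route is shorter and more computational; the paper's is more pictorial and delivers commutativity of the flows as an intermediate byproduct (which is what one actually wants for the pluri-Lagrangian picture discussed in the introduction). Your parenthetical remark at the end---that equality of generating functions forces $\Phi_1^{t_1}\circ\Phi_2^{t_2}=\Phi_2^{t_2}\circ\Phi_1^{t_1}$---is precisely the paper's line of reasoning.
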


We find a discrete time counterpart of Theorem \ref{thm:thm1} very instructive and enlightening, concerning both the statement and the proof. Let $\Lambda_1,\Lambda_2\in C^\infty(M\times M)$ be two discrete time Lagrange functions, generating two symplectomorphisms $F_1$, $F_2$ of $T^*M$ (see section \ref{sect: discr results} for definitions and details).
\begin{theorem}\label{thm:thm2}
If the discrete time Lagrange functions satisfy
\begin{align} \label{eq:coms discr}
\min_{q_1\in M} \Big(\Lambda_1( q_0,q_1) + \Lambda_2(q_1, q_{12})\Big) = \min_{q_2\in M} \Big(\Lambda_2( q_0, q_2) + \Lambda_1(q_2,q_{12})\Big)
\end{align}
for all $ (q_0, q_{12})$ from some neighborhood of the diagonal in $M\times M$, then the symplectic maps $F_1$, $F_2$ commute.
 \end{theorem}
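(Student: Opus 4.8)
The plan is to reduce the statement to the elementary principle that a symplectomorphism of $T^{*}M$ is uniquely recovered from any of its generating functions, together with the standard rule describing how generating functions behave under composition. Recall that, wherever the twist condition $\det\big(\partial^{2}\Lambda_{i}/\partial q_{0}\,\partial q_{1}\big)\neq 0$ holds (in particular near the zero section), $F_{i}$ is given in the standard convention by
\[
F_{i}(q_{0},p_{0})=(q_{1},p_{1}),\qquad p_{0}=-\frac{\partial \Lambda_{i}}{\partial q_{0}}(q_{0},q_{1}),\quad p_{1}=\frac{\partial \Lambda_{i}}{\partial q_{1}}(q_{0},q_{1}).
\]
First I would record the composition rule. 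Set
\[
G_{21}(q_{0},q_{12})=\Lambda_{1}(q_{0},q_{1}^{*})+\Lambda_{2}(q_{1}^{*},q_{12}),\qquad \frac{\partial}{\partial q_{1}}\Big(\Lambda_{1}(q_{0},q_{1})+\Lambda_{2}(q_{1},q_{12})\Big)\Big|_{q_{1}=q_{1}^{*}}=0,
\]
and symmetrically $G_{12}(q_{0},q_{12})=\Lambda_{2}(q_{0},q_{2}^{*})+\Lambda_{1}(q_{2}^{*},q_{12})$ with $q_{2}^{*}$ the corresponding stationary point. Then $F_{2}\circ F_{1}$ is generated by $G_{21}$ and $F_{1}\circ F_{2}$ by $G_{12}$: the identities $p_{0}=-\partial_{q_{0}}G_{21}$ and $p_{12}=\partial_{q_{12}}G_{21}$ follow from the envelope theorem, using that the stationarity condition in $q_{1}$ is precisely the matching of the intermediate momentum produced by $F_{1}$ with the one consumed by $F_{2}$, while $\Lambda_{2}$ (resp.\ $\Lambda_{1}$) is independent of $q_{0}$ (resp.\ $q_{12}$).

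The second step is to identify this with the hypothesis. On a neighborhood of the diagonal of $M\times M$ the function $q_{1}\mapsto \Lambda_{1}(q_{0},q_{1})+\Lambda_{2}(q_{1},q_{12})$ attains its minimum at an interior, nondegenerate critical point that depends smoothly on $(q_{0},q_{12})$ — this is where the nondegeneracy of $\Lambda_{1},\Lambda_{2}$ is used — and this minimizing critical point is exactly the branch $q_{1}^{*}$ selected by the composed symplectomorphism; likewise for $q_{2}^{*}$. Hence on a neighborhood of the diagonal
\[
\min_{q_{1}\in M}\Big(\Lambda_{1}(q_{0},q_{1})+\Lambda_{2}(q_{1},q_{12})\Big)=G_{21}(q_{0},q_{12}),\qquad \min_{q_{2}\in M}\Big(\Lambda_{2}(q_{0},q_{2})+\Lambda_{1}(q_{2},q_{12})\Big)=G_{12}(q_{0},q_{12}),
\]
so that assumption \eqref{eq:coms discr} is equivalent to $G_{21}\equiv G_{12}$ there.

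Finally, $G_{21}=G_{12}$ forces $F_{2}\circ F_{1}=F_{1}\circ F_{2}$: by Step 1 both compositions are described by the very same formulas $p_{0}=-\partial_{q_{0}}G$, $p_{12}=\partial_{q_{12}}G$ with $G=G_{21}=G_{12}$, hence they coincide on the open subset of $T^{*}M$ where this generating description is valid (in particular on a neighborhood of the zero section), which is the asserted commutativity. I expect the main obstacle to be Step 2, namely verifying that passing to the minimum in \eqref{eq:coms discr} really does single out the same smooth, nondegenerate branch of stationary points that governs the composition of the two symplectomorphisms near the diagonal; once that is granted, Steps 1 and 3 are routine bookkeeping with the envelope theorem. (If the $F_{i}$ are only locally defined, ``commute'' is understood on the common domain of definition.)
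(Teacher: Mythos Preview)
Your proposal is correct and follows essentially the same approach as the paper: both differentiate the identity $S_{12}\equiv S_{21}$ with respect to $q_0$ and $q_{12}$, invoke the envelope theorem (in the paper's language, the corner equations $(E_1),(E_2)$) to eliminate the dependence on the intermediate point, and then read off that the two composed maps agree. The only superficial differences are that the paper adopts the opposite sign convention $p_0=\partial\Lambda_i/\partial q_0$, $p_i=-\partial\Lambda_i/\partial q_i$, and packages the conclusion as four explicit ``corner equations'' $(E_0),(E_1),(E_2),(E_{12})$ rather than via the phrase ``generating function of a composition''.
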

 
 Our interest in the Lagrangian characterization of commutativity arose from studying two seemingly unrelated areas:  integrability of the semiclassical limits of quantum integrable systems and of solvable lattice models, on the one hand, and pluri-Lagrangian calculus, on the other hand. Let us briefly outline this motivation.

\paragraph{\em Quantum Integrable Systems and Solvable Lattice Models}
Classical integrable systems are most naturally described within the Hamiltonian framework. Here the phase space is a $ 2n $-dimensional symplectic manifold $ (N, \omega) $. The dynamics of the system is determined by a Hamilton function $ H \in  C^\infty(N) $ and Hamiltonian equations for the time evolution of phase space functions,
\begin{align*}
\frac{df}{dt} = \{H, f\},
\end{align*}
where the Poisson bracket is given by $ \{ f, g \} = \omega^{-1}(df, dg)  $. A system is \emph{Liouville integrable} if it admits $ n $ functionally independent functions $  H_1, H_2, \ldots, H_n $ all pairwise Poisson commuting, $ \{ H_i, H_j \} = 0 $.

In a quantum mechanical system, the phase space is replaced by a Hilbert space of states $ \mathfrak{H} $, and the Hamiltonian by a Hermitian operator $ \hat{H}: \mathfrak{H} \rightarrow \mathfrak{H} $. The time evolution of an observable $ \hat{O} $ in the Heisenberg picture is given by
\begin{align*}
\frac{d \hat{O}}{dt} = \frac{i}{\hbar} [\hat{H}, \hat{O}],
\end{align*} 
where $ [\cdot,\cdot] $ is the commutator in the algebra of observables. In analogy to classical integrable systems, a \emph{quantum integrable system} can be characterized by the existence of ``many'' commuting Hamiltonians.

The semiclassical limit of a quantum mechanical system is found by taking $ \hbar \rightarrow 0 $. In this limit, the commutativity of quantum Hamiltonians and the correspondence principle, 
\begin{align*}
[ \hat{H}_1, \hat{H}_2 ] = i \hbar \{H_1,H_2 \} + O(\hbar^2),
\end{align*}
imply the Poisson commutativity of classical Hamiltonians.

Theorem \ref{thm:thm1} gives an alternative proof of commutativity of classical Hamiltonians for semiclassical quantizations. For example, in one spatial dimension, the kernel $ U(q_1,q_2,t) $ of the propagator $ e^{i \hat{H} t / \hbar} $ has the WKB asymptotic 
\begin{align}\label{eq:wkb}
U(q_1,q_2,t) = \left( \frac{1}{2 \pi i \hbar} \, \frac{\partial^2 S }{ \partial q_1  \partial q_2 } (q_1, q_2, t) \right)^{-\frac{1}{2} } e^{\frac{i}{\hbar} S(q_1,q_2,t) } \left( 1 + O(\hbar) \right),
\end{align}
as $ \hbar \rightarrow 0 $,  where $ S $ is the principal action function of the classical system.

The commutativity of Hamiltonians $ \hat{H}_1 $ and $ \hat{H}_2 $ implies the commutativity of propagators, which can in turn be expressed in terms of their integral kernels as
\begin{align*}
\int_\mathbb{R}  U_1(q_0,q,t_1) \, U_2(q,q_{12},t_2) \, dq = \int_\mathbb{R} \, U_2(q_0,q,t_2) \,  U_1(q,q_{12},t_1) \, dq,
\end{align*}
for any $ t_1, t_2 $. In the semiclassical limit $ \hbar \rightarrow 0 $, by substituting the asymptotic (\ref{eq:wkb}) and applying the stationary phase approximation, it can be seen that the commutativity of the quantum Hamiltonians leads to relation (\ref{eq:coms}) for their actions. Theorem \ref{thm:thm1} then implies that the classical Hamiltonians also commute.

A different application of Theorem \ref{thm:thm1} is found in the thermodynamic limit of certain lattice models; we refer the reader to \cite{Res10,RS} for details. We consider as an example vertex models solvable by the transfer matrix method. Yang-Baxter equations for parameters of these models ensure that the transfer matrices form commutative families of operators. This is an analog to the classical Liouville integrability.


Many lattice models exhibit the \emph{limit shape phenomenon}, see \cite{CKP,Ok,KOS,ZJ}: as the number of sites $\mathcal N$ grows, the macrostate of the system becomes deterministic with the thermodynamic fluctuations becoming exponentially small  in $\mathcal  N $. 

On a cylinder $\mathbb Z_N\times [1,M]$, states are assigned to $\mathbb Z_N$ slices. The thermodynamic limit is found by taking $ N,M \rightarrow\infty $, with the aspect ratio $ t = M/N $ fixed.  A semiclassical state $ \nu: S^1 \rightarrow \mathbb{R} $ is a limit of a convergent (in a certain sense) sequence of states,  $\nu=\lim_{N\to \infty}v^{(N)} $. The limit shape of the system can be determined by a variational principle as follows. 
Fixing two semiclassical states $ \nu_1 =\lim_{N\to\infty}v_1^{(N)}$ and $ \nu_2=\lim_{N\to\infty}v_2^{(N)} $, on the top and on the bottom of the cylinder, the matrix elements of the transfer matrix (i.e., the partition functions) have the asymptotic
\begin{align} \label{eq:transfasym}
\lim_{\substack{N,M \rightarrow \infty \\  M/N = t }} (T^N)_{v_1^{(N)} v_2^{(N)} } = e^{ -N^2 S(\nu_1, \nu_2, t) }\left( 1 + o(1) \right),
\end{align}
where $ S $ is the principal action functional of the field theory
\begin{align*}
 S(\nu_1, \nu_2, t) = \min_{\substack{\varphi:\ S^1 \times [0,t]\to\mathbb R \ \\ \varphi(x,0) = \nu_1(x) \\ \varphi(x,t) =\nu_2(x) }}\int_0^t \int_{S_1} \sigma( \partial_x \varphi, \partial_t \varphi ) \, dx \, dt.
\end{align*}
The minimization is over functions satisfying certain conditions on the gradient we do not specify here. The function $ \sigma $ is the surface tension \cite{CKP, PR}. 
The minimizer $ \varphi^* $ of this variational problem is called the limit shape of the system. It is uniquely defined if  $ \sigma $ is strictly convex.

In the limit $ N \rightarrow \infty $, using the asymptotic (\ref{eq:transfasym}) along with the stationary phase approximation, it is seen that the commutativity of transfer matrices leads to the relation (\ref{eq:coms}) for their actions. Theorem \ref{thm:thm1} suggests that after the Lagrangian field theory is reformulated as a Hamiltonian field theory, the corresponding Hamiltonians $ H_1 $ and $ H_2 $ Poisson commute. For the six vertex model, the Poisson commutativity was proven directly in \cite{RS}.

\paragraph{\em Pluri-Lagrangian systems.} The converse statements to Theorems \ref{thm:thm1} and \ref{thm:thm2} are also easily shown. This relates those theorems to the pluri-Lagrangian theory of commuting Hamiltonian flows and commuting symplectic maps \cite{Sur1}. If the flows $F_1^{t_1}$ and $F_2^{t_2}$ of the Hamilton functions $H_1$ and $H_2$ commute, then, for any $(q_0,p_0)\in T^*M$, one can define a function $(q,p):\mathbb R^2\to T^*M$ by setting
$$
\big(q(t_1,t_2),p(t_1,t_2)\big)=F_1^{t_1}\circ F_2^{t_2}(q_0,p_0).
$$
The functions $q:\mathbb R^2\to M$ obtained by the canonical projection $T^*M\to M$ have the following remarkable property: for any curve $\gamma$ in $\mathbb R^2$ connecting $(0,0)$ with $(t_1,t_2)$, the restriction $q\circ \gamma$ minimizes the action functional $S_\gamma=\int_\gamma (L_1dt_1+L_2dt_2)$ under the boundary conditions $q(0,0)=q_0$ and $q(t_1,t_2)=q_{12}$; moreover, if $\{H_1,H_2\}=0$ then the critical value of $S_\gamma$ does not depend on $\gamma$. Theorem \ref{thm:thm1} effectively deals with $S_{\gamma_1}$ and $S_{\gamma_2}$ for two stepped curves
$$
\gamma_1(t)=\left\{\begin{array}{ll} (t,0) &{\rm for}\;\; t\in[0,t_1],\\
                                                        (t_1,t-t_1) &{\rm for}\;\; t\in[t_1,t_1+t_2],\end{array}\right. 
$$                                                        
resp.
$$
\gamma_2(t)=\left\{\begin{array}{ll} (0,t) & {\rm for}\;\; t\in[0,t_2],\\
                                                        (t-t_2,t_2) & {\rm for}\;\; t\in[t_2,t_1+t_2],\end{array}\right.
$$
and claims that equality of the critical values of $S_{\gamma_1}$ and $S_{\gamma_2}$ is sufficient for the validity of the pluri-Lagrangian picture. For a general discussion of the role of stepped curves (resp. surfaces) in the pluri-Lagrangian calculus see \cite{SV}.

\section{Commuting Actions and Hamiltonians}
\label{sect cont results}

\subsection{Lagrangian and Hamiltonian Mechanics} 
\label{subsect Lagr Ham}

We first review some standard concepts to fix notation and definitions. 
Let $M$ be  a smooth, compact, $n$-dimensional manifold. It will play the role of the configuration space. Let $ L: TM \rightarrow \mathbb{R} $, $ (q,\dot q) \mapsto L(q,\dot q) $ for $ q\in M $ and $ \dot q \in T_q M $  be a smooth Lagrange function. The action functional of a continuous path $q : [t_1,t_2] \rightarrow M $ is
\begin{align}\label{eq:action}
\mathcal{S}[q] = \int_{t_1}^{t_2} L(q(t), \dot q (t)) dt,
\end{align}
where $ \dot q (t) = \frac{d}{dt}q(t) \in T_{q(t)}M $. The classical trajectories of the system are determined by the principle of least action. The trajectory starting at $ q_1 \in M $ at time $ t_1 $  and  arriving at $ q_2 \in M$ at $ t_2 $ minimizes the action \eqref{eq:action} over all continuous paths $q: [t_1, t_2] \rightarrow M$ satisfying $q(t_1) = q_1$ and 
$q(t_2) = q_2$ (we assume that the minimizer exists and is unique). The critical value of action is called {\em principal action function}:
\begin{align} \label{eq:acS}
 S(q_1, q_2, t_2-t_1)  = \text{min} \; \big\{ \mathcal{S}[q] \; \big |  \; q: [t_1, t_2] \rightarrow M, \; q(t_1) = q_1, \; q(t_2) = q_2 \big\}.
\end{align}
This defines a smooth function at least as long as $(q_1,q_2)$ lies in some neighborhood of the diagonal in $ M \times M$ and $t_2-t_1>0$ is sufficiently small.

Classical trajectories satisfy the Euler-Lagrange equations. In local coordinates,
\begin{align}\label{eq:eleq}
\frac{d}{dt} \frac{\partial L }{\partial \dot q^j }( q(t) , \dot q(t) ) - \frac{\partial L}{\partial q^j}( q(t), \dot q(t)  )  = 0, \quad j = 1, \ldots, n.
\end{align}

The formula
\begin{equation}\label{eq:p}
p_j=\frac{\partial L(q,\dot q)}{\partial \dot q^j}, \quad j=1,\ldots,n,
\end{equation}
defines the {\em Legendre transformation} which is a vector bundle map $TM\to T^*M$, $(q,\dot q)\mapsto (q,p)$. Under the standard non-degeneracy conditions,
$$
\det\left(\frac{\partial^2 L}{\partial \dot q^j\partial \dot q^k}\right)_{j,k=1}^n\neq 0,
$$
map \eqref{eq:p} is a local diffeomorphism and possesses (locally) the inverse function $\dot q=\dot q(q,p)$.
The Hamilton function $ H: T^*M \rightarrow \mathbb{R} $ is by definition the Legendre transform of $ L $:
\begin{equation} \label{eq:leg}
H(q, p) = \big( \langle p, \dot q \rangle  - L(q,\dot q)  \big)\big |_{\dot q=\dot q(q,p)},
\end{equation}
here $ \langle p, \dot q \rangle $ is the pairing of $ p \in T^*_q M $ and $ \dot q \in T_q M $. 
%
%

Solutions to the Euler-Lagrange equations are mapped by the Legendre transformation \eqref{eq:p} to solutions of the Hamiltonian flow with Hamilton function $ H $, which is given 
in local coordinates by
\begin{align}\label{eq:hamflow}
\frac{d q^j}{dt} =  \{ H , q^j \}=\frac{\partial H}{\partial p_j},  \quad \frac{d p_j}{dt} = \{ H, p_j \}=-\frac{\partial H}{\partial q^j}, \quad  j = 1, \ldots, n,
\end{align}
where the canonical Poisson bracket of $ f, g \in C^\infty(T^*M ) $ is
\begin{align}
\{f, g \} = \sum_{j = 1}^n \left(\frac{\partial f}{\partial p_j} \frac{\partial g}{\partial q^j} - \frac{\partial f}{\partial q^j} \frac{\partial g}{\partial p_j} \right).
\end{align}
Conversely, solutions of the flow \eqref{eq:hamflow} project to solutions of (\ref{eq:eleq}) by the natural bundle projection $ \pi: T^*M \rightarrow M $. Thus, the Hamiltonian and Lagrangian frameworks are equivalent.

Assuming as before that the principal action function  (\ref{eq:acS}) is a smooth function,,
the {\em Hamilton-Jacobi equations} give the differential of $ S $. In coordinates,
\begin{equation} \label{eq:hje}
\frac{\partial S(q_1,q_2,t)}{\partial q^j_1} =  -p_j(t_1)=-\frac{\partial L}{\partial \dot q^j}(q_1,\dot q_1),\quad
\frac{\partial S(q_1,q_2,t)}{\partial q^j_2} =  p_j(t_2)=\frac{\partial L}{\partial \dot q^j}(q_2,\dot q_2), 
\end{equation}
\begin{equation} \label{eq:hje 2}
\frac{\partial S(q_1,q_2,t)}{\partial t} =   -  H,
\end{equation}
where  $\dot q_1=\dot q(t_1)$ and $\dot q_2=\dot q(t_2)$ are velocities at the endpoints of the minimizing path, and $ H = H(q(t),p(t)) $ is constant along the trajectory.

\subsection{Composing Actions}
Let $ L_1,  L_2 : TM \rightarrow \mathbb{R} $ be two smooth non-degenerate Lagrange functions, and $ S_1, S_2 $ and $ H_1, H_2 $ their corresponding principal action functions and Hamilton functions.

For $ q_0, q_{12} \in M $ and $ t_1, t_2 > 0 $, let
\begin{equation} \label{eq:gS}
S_{12}(q_0,q_{12}, t_1, t_2) = \min_{q_1 \in M} \Big( S_1(q_0, q_1, t_1) +   S_2(q_1,q_{12}, t_2) \Big).
\end{equation}
This function can be understood as follows. Consider the action functional 
\begin{equation} \label{eq:gluedfunctional}
\mathcal{S}_{12}[q] = \int_{0}^{t_1} L_1(q (t), \dot q(t)) dt+\int_{t_1}^{t_1+t_2} L_2(q (t), \dot q(t)) dt,
\end{equation}
defined on continuous paths $q: [0,t_1+t_2] \rightarrow M $ with $ q(0) = q_0 $ and $ q(t_1+t_2) = q_{12} $.
It is clear that minimizing trajectory of this action functional satisfies the Euler-Lagrange equations of $ L_1 $ for $ t \in( 0,t_1) $, and Euler-Lagrange equations of $ L_2 $ for $ t 
\in(t_1,t_1+t_2)$. Minimizing $ \mathcal{S}_{12}[q] $  reduces to minimizing with respect to the intermediate point $ q(t_1)=q_1 $. Thus, $ S_{12}(q_0,q_{12},t_1,t_2) $ is the principal action function of the functional (\ref{eq:gluedfunctional}).

Similarly, we set
\begin{equation} \label{eq:gS 21}
S_{21}(q_0,q_{12}, t_2, t_1) = \min_{q_2 \in M} \Big( S_2(q_0, q_2, t_2) +   S_1(q_2,q_{12}, t_1) \Big),
\end{equation}
which is the principal action function for the functional
\begin{equation} \label{eq:gluedfunctional 21}
\mathcal{S}_{21}[q] = \int_{0}^{t_2} L_2(q (t), \dot q(t)) dt+\int_{t_2}^{t_1+t_2} L_1(q (t), \dot q(t)) dt,
\end{equation}
defined on continuous paths $q: [0,t_1+t_2] \rightarrow M $ such that $ q(0) = q_0 $ and $ q(t_1+t_2) = q_{12} $.

\begin{definition}
We say that the principal actions of the Lagrange functions $L_1$, $L_2$ commute, if the two functions \eqref{eq:gS}, \eqref{eq:gS 21} coincide identically:
\begin{equation}\label{eq:S12=S21}
S_{12}(q_0,q_{12},t_1,t_2)=S_{21}(q_0,q_{12},t_2,t_1)
\end{equation} 
for all $(q_0,q_{12})$ from some neighborhood of the diagonal in $M\times M$ and for all sufficiently small $t_1,t_2>0$. 
\end{definition}

\begin{theorem}\label{thm:main cont}
If the principal actions of the Lagrange functions $L_1$, $L_2$ commute then the corresponding Hamilton functions Poisson commute, $\{H_1,H_2\}=0$.
\end{theorem}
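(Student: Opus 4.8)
The plan is to read the commutativity relation $S_{12}\equiv S_{21}$ through the Hamilton--Jacobi equations, deduce that $H_1$ is a conserved quantity for the Hamiltonian flow of $H_2$, and then differentiate that conservation law at time $t=0$ to obtain $\{H_1,H_2\}=0$. Throughout, $\phi_1^t$ and $\phi_2^t$ denote the time-$t$ flows of $H_1$ and $H_2$ on $T^\ast M$.

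First I would record Hamilton--Jacobi relations for the composed action $S_{12}$ of \eqref{eq:gS}. For fixed small $t_1,t_2>0$ and $(q_0,q_{12})$ near the diagonal the minimum in \eqref{eq:gS} is attained at a unique interior point $q_1^\ast=q_1^\ast(q_0,q_{12},t_1,t_2)$ depending smoothly on all its arguments; this follows from the implicit function theorem applied to the first-order condition $\partial_{q_1}S_1(q_0,q_1,t_1)+\partial_{q_1}S_2(q_1,q_{12},t_2)=0$, whose Hessian in $q_1$ is non-degenerate for small times. The envelope theorem removes the $q_1^\ast$-derivatives, and combining this with \eqref{eq:hje}--\eqref{eq:hje 2} for $S_1,S_2$ and the constancy of each $H_i$ along its minimizing trajectory, I get
\[
\frac{\partial S_{12}}{\partial q_0}=-p_0,\qquad \frac{\partial S_{12}}{\partial q_{12}}=p_{12},\qquad \frac{\partial S_{12}}{\partial t_1}=-H_1(q_0,p_0),\qquad \frac{\partial S_{12}}{\partial t_2}=-H_2(q_{12},p_{12}),
\]
where $(q_0,p_0)$ and $(q_{12},p_{12})$ are the initial and final points of the phase-space trajectory obtained by applying first $\phi_1^{t_1}$ and then $\phi_2^{t_2}$ (continuity of the momentum at the junction being exactly the Weierstrass--Erdmann corner condition, i.e. the first-order condition above). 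The identical computation for $S_{21}$ gives $\partial_{q_0}S_{21}=-\widetilde p_0$, $\partial_{q_{12}}S_{21}=\widetilde p_{12}$ and $\partial_{t_1}S_{21}=-H_1(q_{12},\widetilde p_{12})$, with $(q_0,\widetilde p_0)$ and $(q_{12},\widetilde p_{12})$ the endpoints of the opposite composition.

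Next I would differentiate the hypothesis $S_{12}\equiv S_{21}$. Matching the $q_0$- and $q_{12}$-derivatives forces $p_0=\widetilde p_0$ and $p_{12}=\widetilde p_{12}$, and then matching the $t_1$-derivatives yields $H_1(q_0,p_0)=H_1(q_{12},p_{12})$. Since $H_1$ is already constant along the first segment of the composed trajectory, $H_1(q_0,p_0)=H_1(q_1^\ast,p_1^\ast)$ at the junction point, so this equality says precisely that $H_1$ is constant along the second segment --- that is, along the flow $\phi_2$ issued from $(q_1^\ast,p_1^\ast)$. To finish, I take an arbitrary $z=(q,p)\in T^\ast M$ and small $t_1,t_2>0$: running $\phi_1$ backward from $z$ for time $t_1$ lands over a point $q_0$ near $q$, and $q_{12}:=\pi\big(\phi_2^{t_2}(z)\big)$ is also near $q$, so $(q_0,q_{12})$ lies near the diagonal and, by uniqueness of the minimizer, $z$ is the junction point of the corresponding composed minimizing trajectory. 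Hence $H_1$ is constant along $t\mapsto\phi_2^{t}(z)$ for $t\in[0,t_2]$; differentiating at $t=0$ and using $\tfrac{d}{dt}\big(H_1\circ\phi_2^{t}\big)=-\{H_1,H_2\}\circ\phi_2^{t}$ (the sign convention of the excerpt) gives $\{H_1,H_2\}(z)=0$, and since $z$ was arbitrary, $\{H_1,H_2\}\equiv0$.

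The genuinely delicate part is the first step: one must carefully establish existence, uniqueness, interiority and smooth parameter-dependence of the minimizer $q_1^\ast$, and correctly invoke the corner condition to identify the composed phase-space trajectory; the envelope-theorem bookkeeping and the concluding density argument are then routine. I note in passing that the first two steps could instead be carried out by letting $t_1\to0^+$: an elementary expansion of \eqref{eq:gS} together with the Legendre relation $p=\partial L_1/\partial\dot q$ gives $t_1^{-1}\big(S_{12}-S_2\big)\to-H_1\big(q_0,-\partial_{q_0}S_2(q_0,q_{12},t_2)\big)$ and $t_1^{-1}\big(S_{21}-S_2\big)\to-H_1\big(q_{12},\partial_{q_{12}}S_2(q_0,q_{12},t_2)\big)$, and equality of these two limits is again the statement that $H_1$ is constant along $H_2$-trajectories.
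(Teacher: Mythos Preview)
Your argument is correct and follows essentially the same route as the paper: establish the corner condition and the Hamilton--Jacobi relations for the composed action $S_{12}$, differentiate the identity $S_{12}\equiv S_{21}$ in $q_0,q_{12}$ to match endpoint momenta and in $t_1$ to obtain $H_1(q_0,p_0)=H_1(q_{12},p_{12})$, and then combine with the conservation of $H_1$ along the first leg to see that $H_1$ is conserved along the $H_2$-flow. The paper organizes these steps into a sequence of lemmas and additionally proves commutativity of the flows $F_1^{t_1}\circ F_2^{t_2}=F_2^{t_2}\circ F_1^{t_1}$ along the way, while you bypass this and go straight to the junction-point argument; your alternative $t_1\to 0^+$ computation is not in the paper but is a nice shortcut.
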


\noindent We first establish a few lemmas used in the proof of the theorem.

\begin{lem} \label{lem:lem1}
The conjugate momentum $p^*$ along the critical curve $q^*$ of the functional \eqref{eq:gluedfunctional} is continuous at $t=t_1$.
\end{lem}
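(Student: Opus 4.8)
The plan is to track what happens to the minimizing path of $\mathcal{S}_{12}$ near the gluing time $t=t_1$. First I would observe that the critical curve $q^*$ restricted to $[0,t_1]$ is a minimizer for $L_1$ with endpoints $q_0$ and $q_1$, while restricted to $[t_1,t_1+t_2]$ it is a minimizer for $L_2$ with endpoints $q_1$ and $q_{12}$; here $q_1=q^*(t_1)$ is the actual minimizer in \eqref{eq:gS}. The key point is that the first-order condition for the optimal intermediate point $q_1$ — namely $\partial_{q_1}\big(S_1(q_0,q_1,t_1)+S_2(q_1,q_{12},t_2)\big)=0$ — combined with the Hamilton-Jacobi relations \eqref{eq:hje} forces the left momentum and the right momentum at $t_1$ to agree.

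Concretely, by \eqref{eq:hje} applied to $S_1$ (as a function of its second argument $q_1$) and to $S_2$ (as a function of its first argument $q_1$), we have
\begin{equation*}
\frac{\partial S_1}{\partial q_1^j}(q_0,q_1,t_1) = p_j^-(t_1), \qquad \frac{\partial S_2}{\partial q_1^j}(q_1,q_{12},t_2) = -p_j^+(t_1),
\end{equation*}
where $p^-(t_1)$ is the momentum of the $L_1$-segment at its right endpoint and $p^+(t_1)$ is the momentum of the $L_2$-segment at its left endpoint, both computed via the respective Legendre transformations \eqref{eq:p}. Since $q_1$ minimizes $S_1(q_0,q_1,t_1)+S_2(q_1,q_{12},t_2)$ over $q_1\in M$, the sum of these two partial derivatives vanishes, which gives $p^-(t_1)=p^+(t_1)$. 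I should note that both Lagrangians are being transformed to the \emph{same} cotangent fiber $T^*_{q_1}M$, so the comparison of $p^\pm(t_1)$ is intrinsic and coordinate-free, and that smoothness of $S_1,S_2$ near the diagonal (for small $t_1,t_2$) guarantees the interior minimizer $q_1$ exists, is unique, and depends smoothly on the data, so the stationarity condition is legitimate.

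The step I expect to require the most care is being precise about \emph{which} momentum the Hamilton-Jacobi formula \eqref{eq:hje} produces at each endpoint and matching signs correctly: $S$ as a function of the \emph{final} position yields $+p$ at the final time, while as a function of the \emph{initial} position it yields $-p$ at the initial time, and one must feed $q_1$ in as the final argument of $S_1$ but the initial argument of $S_2$. Once the bookkeeping is done, continuity of $p^*$ at $t=t_1$ — i.e. $p^*(t_1^-)=p^*(t_1^+)$ in $T^*_{q_1}M$ — follows immediately. (Note the position component $q^*$ is continuous by construction, since both segments share the value $q_1$ at $t_1$; the lemma is really the assertion about the momentum.)
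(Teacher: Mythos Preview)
Your proposal is correct and follows essentially the same route as the paper: use the stationarity condition $\partial_{q_1}\big(S_1(q_0,q_1,t_1)+S_2(q_1,q_{12},t_2)\big)=0$ at the minimizing intermediate point, then invoke the Hamilton--Jacobi relations \eqref{eq:hje} to identify the two terms as the left and right momenta (with the correct signs) at $t=t_1$, yielding $p^*(t_1^-)=p^*(t_1^+)$. Your extra remarks about sign bookkeeping and the intrinsic nature of the comparison in $T^*_{q_1}M$ are accurate and helpful, but the core argument is identical to the paper's.
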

\begin{proof}
Let $ q_1^* $ be the critical value of $ q_1 $ minimizing (\ref{eq:gS}) so that
\begin{align} \label{eq: lem1 aux0}
S_{12}(q_0,q_{12},t_1, t_2) =  S_1(q_0, q_1^*, t_1) +  S_2(q_1^*, q_{12}, t_2).
\end{align}
Due to criticality, we have:
\begin{equation} \label{eq: lem1 aux}
\frac{\partial S_1}{\partial q_1^j}(q_0,q_1^*,t_1) +  \frac{\partial S_2}{\partial q_1^j}(q_1^*, q_{12}, t_2) = 0, \quad j=1,\ldots,n. 
\end{equation}
Using Hamilton-Jacobi equation (\ref{eq:hje}) to differentiate $S_1$, $S_2$ with respect to $ q_1 $, we find:
\begin{equation}
\lim_{t\to t_1-0}\frac{\partial L_1}{\partial \dot q^j}(q^*(t),\dot q^*(t)) -  \lim_{t\to t_1+0}\frac{\partial L_2}{\partial \dot q^j}(q^*(t),\dot q^*(t))  = 0, \quad j=1,\ldots,n,
\end{equation}
or
\begin{equation}
\lim_{t\to t_1-0}p_j^*(t)-  \lim_{t\to t_1+0}p_j^*(t) = 0, \quad j=1,\ldots,n.
\end{equation}
This proves the lemma.
\end{proof}

\begin{lem}
Partial derivatives of the principal action function $ S_{12} : M \times M \times \mathbb{R}_{> 0} \times \mathbb{R}_{>0} \rightarrow\mathbb{R} $ are given by:
\begin{align} 
& \frac{\partial S_{12}(q_0,q_{12},t_1,t_2)}{\partial q_0^j} =-p_j^*(0)=-\frac{\partial L_1}{\partial \dot q^j}(q^*(0),\dot q^*(0)),  \label{eq:genHJE 1} \\
& \frac{\partial S_{12}(q_0,q_{12},t_1,t_2)}{\partial q_{12}^j} =p_j^*(t_1+t_2)=\frac{\partial L_2}{\partial \dot q^j}(q^*(t_1+t_2),\dot q^*(t_1+t_2)), \label{eq:genHJE 2}\\
& \frac{\partial S_{12}(q_0,q_{12},t_1,t_2)}{\partial t_1} =-H_1(q^*(0),p^*(0)), \label{eq:genHJE 3}\\
& \frac{\partial S_{12}(q_0,q_{12},t_1,t_2)}{\partial t_2} =-H_2(q^*(t_1+t_2),p^*(t_1+t_2)). \label{eq:genHJE 4}
\end{align}
\end{lem}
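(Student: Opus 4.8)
The plan is to differentiate the defining relation \eqref{eq:gS} directly, treating it as an instance of the envelope theorem: since for each value of the parameters $(q_0,q_{12},t_1,t_2)$ the inner expression $S_1(q_0,q_1,t_1)+S_2(q_1,q_{12},t_2)$ is stationary in $q_1$ at the minimizer $q_1^*$, the total derivative of the minimal value with respect to any one of the parameters coincides with the partial derivative of the inner expression evaluated at $q_1=q_1^*$, with no contribution coming from the variation of $q_1^*$. Once this is in place, each of the four formulas follows by substituting the Hamilton--Jacobi equations \eqref{eq:hje}, \eqref{eq:hje 2} for the individual principal actions $S_1$ and $S_2$.

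First I would record that the minimizer $q_1^*=q_1^*(q_0,q_{12},t_1,t_2)$ depends smoothly on the parameters. This follows from the implicit function theorem applied to the criticality equation \eqref{eq: lem1 aux}, once one knows that the Hessian $\big(\partial^2(S_1+S_2)/\partial q_1^j\partial q_1^k\big)_{j,k}$ is non-singular at $q_1^*$; for $(q_0,q_{12})$ near the diagonal and $t_1,t_2$ small, this Hessian is a small perturbation of the Hessian of the action of a short arc and is positive definite, so the minimum is non-degenerate and isolated. This is the same regularity input that already underlies the smoothness of $S_1$, $S_2$, and $S_{12}$ near the diagonal.

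Next, writing $S_{12}=S_1(q_0,q_1^*,t_1)+S_2(q_1^*,q_{12},t_2)$ and differentiating by the chain rule, all terms proportional to $\partial q_1^{*}/\partial(\cdot)$ collect into $\sum_j\big(\partial S_1/\partial q_1^j+\partial S_2/\partial q_1^j\big)\,\partial q_1^{*j}/\partial(\cdot)$, which vanishes by \eqref{eq: lem1 aux}. Hence $\partial S_{12}/\partial q_0^j=\partial S_1/\partial q_0^j(q_0,q_1^*,t_1)$, and by the Hamilton--Jacobi equation \eqref{eq:hje} for $L_1$ this equals $-p_j^*(0)=-\partial L_1/\partial\dot q^j(q^*(0),\dot q^*(0))$, giving \eqref{eq:genHJE 1}; symmetrically $\partial S_{12}/\partial q_{12}^j=\partial S_2/\partial q_{12}^j(q_1^*,q_{12},t_2)=p_j^*(t_1+t_2)$, giving \eqref{eq:genHJE 2}. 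For the time derivatives, $\partial S_{12}/\partial t_1=\partial S_1/\partial t_1(q_0,q_1^*,t_1)=-H_1$ by \eqref{eq:hje 2}, where $H_1$ is the constant energy along the $L_1$-arc $q^*|_{[0,t_1]}$; evaluating it at $t=0$ gives \eqref{eq:genHJE 3}, and likewise $\partial S_{12}/\partial t_2=-H_2$ evaluated along the $L_2$-arc at $t=t_1+t_2$ gives \eqref{eq:genHJE 4}. Here Lemma \ref{lem:lem1} is used to guarantee that $p^*$ is unambiguously defined across $t=t_1$, so that the curve $(q^*,p^*)$ on $[0,t_1+t_2]$ and its endpoint values make sense.

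The main obstacle is the regularity step: ensuring that the minimizer in \eqref{eq:gS} is unique, smooth in the parameters, and non-degenerate, so that the envelope argument and the chain rule are legitimate. Everything after that is bookkeeping with the chain rule together with the single-leg Hamilton--Jacobi relations \eqref{eq:hje}--\eqref{eq:hje 2}.
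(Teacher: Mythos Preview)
Your argument is correct and follows essentially the same route as the paper: differentiate $S_{12}=S_1(q_0,q_1^*,t_1)+S_2(q_1^*,q_{12},t_2)$ by the chain rule, use the criticality condition \eqref{eq: lem1 aux} to kill the $q_1^*$-terms, and read off the remaining derivatives from the single-leg Hamilton--Jacobi relations \eqref{eq:hje}, \eqref{eq:hje 2}. Your added discussion of the regularity of $q_1^*$ and the invocation of Lemma~\ref{lem:lem1} are welcome elaborations but do not change the underlying strategy.
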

\begin{proof}
We differentiate $S_{12}$ using expression \eqref{eq: lem1 aux0}. Since $ q_1^* $ depends on  $ q$, $q_{12}$, $t_1 $ and $ t_2 $, differentiating with respect to any of these involves differentiation with respect to $q_1^*$ and an application of the chain rule. However, by \eqref{eq: lem1 aux}, all such terms vanish. The remaining terms follow from Hamilton-Jacobi equations \eqref{eq:hje}, \eqref{eq:hje 2}.
\end{proof}

\begin{lem} \label{lem: p p12}
Let $ q^*: [0, t_1+t_2] \rightarrow M $ be the minimizing path corresponding to $S_{12}(q_0,q_{12},t_1,t_2)$, and $ q^{**}:[0, t_1+t_2] \rightarrow M $ the minimizing path corresponding to $S_{21}(q_0,q_{12},t_2,t_1)$. Let $ p^* $ and $ p^{**}$ be the corresponding conjugate momenta. If the principal action functions of the Lagrange functions $L_1$, $L_2$ commute then the phase space trajectories $ p^* $ and $ p^{**} $ have the same endpoints:
\begin{equation}\label{eq:comp}
p^*(0)= p^{**}(0), \quad
p^*(t_1+t_2) = p^{**}(t_1+t_2).
\end{equation}
\end{lem}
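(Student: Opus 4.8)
The plan is to differentiate the commutativity identity \eqref{eq:S12=S21} in the boundary variables $q_0$ and $q_{12}$ and to identify the resulting derivatives with endpoint momenta via the generalized Hamilton--Jacobi equations established above. So this is really a one-line consequence of the preceding lemma together with the observation that differentiating an identity of smooth functions gives an identity of their derivatives.

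First I would record the counterpart of \eqref{eq:genHJE 1}--\eqref{eq:genHJE 2} for the function $S_{21}$. The derivation of those formulas for $S_{12}$ used only the criticality of the intermediate point (which kills, via the chain rule, all contributions coming from its dependence on $q_0,q_{12},t_1,t_2$) together with the ordinary Hamilton--Jacobi equations \eqref{eq:hje} for $S_1$ and $S_2$; the order in which $L_1$ and $L_2$ are composed plays no role. Interchanging $L_1\leftrightarrow L_2$ and $t_1\leftrightarrow t_2$ therefore yields
\begin{align*}
\frac{\partial S_{21}(q_0,q_{12},t_2,t_1)}{\partial q_0^j}=-p_j^{**}(0),\qquad
\frac{\partial S_{21}(q_0,q_{12},t_2,t_1)}{\partial q_{12}^j}=p_j^{**}(t_1+t_2),
\end{align*}
where $p^{**}$ is the conjugate momentum along the minimizer $q^{**}$ of $\mathcal S_{21}$.

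Next, since $S_{12}(q_0,q_{12},t_1,t_2)$ and $S_{21}(q_0,q_{12},t_2,t_1)$ agree identically on a neighborhood of the diagonal in $M\times M$ for all sufficiently small $t_1,t_2>0$, all of their partial derivatives agree there too. Differentiating \eqref{eq:S12=S21} with respect to $q_0^j$ and comparing \eqref{eq:genHJE 1} with the first formula above gives $p_j^*(0)=p_j^{**}(0)$; differentiating with respect to $q_{12}^j$ and comparing \eqref{eq:genHJE 2} with the second formula gives $p_j^*(t_1+t_2)=p_j^{**}(t_1+t_2)$. Since moreover $q^*(0)=q^{**}(0)=q_0$ and $q^*(t_1+t_2)=q^{**}(t_1+t_2)=q_{12}$ by construction, this is precisely the claim \eqref{eq:comp}.

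There is no genuinely hard step; the one point deserving care is that the momenta $p^*(0),p^*(t_1+t_2),p^{**}(0),p^{**}(t_1+t_2)$ are well defined and depend smoothly on $(q_0,q_{12},t_1,t_2)$, so that passing from an identity of functions to an identity of their derivatives is legitimate. This is guaranteed by the standing non-degeneracy and uniqueness assumptions on $L_1,L_2$, under which $S_{12}$ and $S_{21}$ — and hence the minimizers and their endpoint momenta — are smooth for $(q_0,q_{12})$ near the diagonal and $t_1,t_2$ small. Note that the lemma asserts only coincidence of the endpoints; equality of the full trajectories $q^*$ and $q^{**}$ is a separate matter, to be addressed in the proof of Theorem \ref{thm:main cont}.
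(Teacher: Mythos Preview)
Your argument is correct and follows exactly the paper's approach: differentiate the identity $S_{12}=S_{21}$ with respect to $q_0$ and $q_{12}$, and invoke \eqref{eq:genHJE 1}, \eqref{eq:genHJE 2} (and their analogs for $S_{21}$) to identify the resulting derivatives with endpoint momenta. Your write-up simply spells out in more detail what the paper records in one sentence.
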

\begin{proof}
Use equations (\ref{eq:genHJE 1}), (\ref{eq:genHJE 2}) to compute partial derivatives with respect to $ q_0 $ and $ q_{12} $ of both sides of equation $S_{12}=S_{21}$. Equating the components gives (\ref{eq:comp}).
\end{proof}

\begin{lem} If the principal action functions of the Lagrange functions $L_1$, $L_2$ commute then the flows $F_1^{t_1}$ and $F_2^{t_2}$ of the Hamilton functions $H_1$ and $H_2$ commute.
\end{lem}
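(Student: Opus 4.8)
The plan is to show that the composition $F_1^{t_1}\circ F_2^{t_2}$ and $F_2^{t_2}\circ F_1^{t_1}$ agree on an open set of initial data, for all sufficiently small $t_1,t_2>0$, and then invoke analyticity or continuity to extend. More precisely, fix $(q_0,p_0)\in T^*M$ near the diagonal data and let $q^*$ be the minimizing path for $S_{12}(q_0,q_{12},t_1,t_2)$, where $q_{12}$ is chosen so that $p^*(0)=p_0$; such a $q_{12}$ exists and depends smoothly on the data by the implicit function theorem, using \eqref{eq:genHJE 1} and nondegeneracy of the Legendre transform. By construction the segment $q^*|_{[0,t_1]}$ is the $L_1$-trajectory, so it realizes $F_1^{t_1}(q_0,p_0)$, and by Lemma \ref{lem:lem1} the momentum is continuous at the break point, so the segment $q^*|_{[t_1,t_1+t_2]}$ is the $L_2$-trajectory starting from $F_1^{t_1}(q_0,p_0)$; hence the endpoint of the phase-space lift of $q^*$ is exactly $F_2^{t_2}\circ F_1^{t_1}(q_0,p_0)$, whose base point is $q_{12}$ and whose momentum is $p^*(t_1+t_2)$.

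Next I would run the same argument for $S_{21}(q_0,q_{12},t_2,t_1)$ with the \emph{same} $(q_0,q_{12})$: its minimizing path $q^{**}$ has phase-space lift starting at $(q_0,p^{**}(0))$ and ending at $(q_{12},p^{**}(t_1+t_2))$, and this lift is precisely the concatenation realizing $F_1^{t_1}\circ F_2^{t_2}(q_0,p^{**}(0))$. The commutativity hypothesis \eqref{eq:S12=S21} combined with Lemma \ref{lem: p p12} gives $p^*(0)=p^{**}(0)$ and $p^*(t_1+t_2)=p^{**}(t_1+t_2)$. Therefore both $F_2^{t_2}\circ F_1^{t_1}$ and $F_1^{t_1}\circ F_2^{t_2}$ send $(q_0,p_0)=(q_0,p^{**}(0))$ to the same point $(q_{12},p^*(t_1+t_2))$. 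Since $(q_0,p_0)$ ranges over an open subset of $T^*M$ as $(q_0,q_{12})$ ranges over a neighborhood of the diagonal, we conclude $F_1^{t_1}\circ F_2^{t_2}=F_2^{t_2}\circ F_1^{t_1}$ on that open set for all sufficiently small $t_1,t_2>0$, and then on all of $T^*M$ by continuity in the initial point and the group property of the flows (extending small-time commutativity to all times in the usual way).

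The main obstacle I anticipate is bookkeeping the identification of the two concatenated segments with the correct composition of flows, i.e., making sure that ``minimizing over the intermediate point $q_1$'' really produces a genuine trajectory that is the $L_1$-flow followed by the $L_2$-flow \emph{with the matching momentum handed off correctly}. This is exactly what Lemma \ref{lem:lem1} secures: the momentum does not jump at $t=t_1$, so the output of $F_1^{t_1}$ is a legitimate initial condition for $F_2^{t_2}$. A secondary technical point is the surjectivity/smoothness of the map $q_{12}\mapsto p_0$ needed to realize an arbitrary initial momentum; this follows because, for small $t_1$, $S_1(q_0,\cdot,t_1)$ is a generating function of the symplectomorphism $F_1^{t_1}$ near the identity, so $-\partial S_{12}/\partial q_0$ sweeps out a neighborhood of $p_0=\partial L_1/\partial\dot q(q_0,0)$-type data as $q_{12}$ varies, again by nondegeneracy. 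With these two points in hand the argument is a direct chase through the Hamilton--Jacobi identities \eqref{eq:genHJE 1}--\eqref{eq:genHJE 4} already established.
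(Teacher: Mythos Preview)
Your argument is correct and follows essentially the same route as the paper's proof: identify the minimizing path for $S_{12}$ with the concatenation of the $F_1^{t_1}$-flow followed by the $F_2^{t_2}$-flow via Lemma~\ref{lem:lem1}, then invoke Lemma~\ref{lem: p p12} to match the endpoint momenta of the corresponding $S_{21}$-minimizer, yielding $F_2^{t_2}\circ F_1^{t_1}(q_0,p_0)=F_1^{t_1}\circ F_2^{t_2}(q_0,p_0)$. The only cosmetic difference is directional: the paper starts from an arbitrary $(q_0,p_0)$, \emph{defines} $q_{12}$ as the basepoint of $F_2^{t_2}\circ F_1^{t_1}(q_0,p_0)$, and then observes that the resulting concatenated trajectory must coincide with the $S_{12}$-minimizer (by uniqueness), whereas you start from $(q_0,q_{12})$ and invoke the implicit function theorem to realize a given $p_0$; these are equivalent, and your version is arguably more explicit about the nondegeneracy needed.
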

\begin{proof}
Take an arbitrary $ (q_0,p_0) \in T^*M $.  Let $ (q_1, p_1)=F_1^{t_1}(q_0,p_0)$ and  $ (q_{12}, p_{12})=F_2^{t_2}(q_1,p_1)$. Due to Lemma \ref{lem:lem1}, the minimizing path $q^*(t)$ of $S_{12}(q_0,q_{12},t_1,t_2)$ is the canonical projection to $M$ of the curve $(q^*,p^*):[0,t_1+t_2]\to T^*M$ defined by
$$
(q^*(t),p^*(t))=\left\{\begin{array}{ll}
F_1^t(q_0,p_0), & 0\le t\le t_1, \\ F_2^{t-t_1}(q_1,p_1), & t_1\le t\le t_1+t_2.
\end{array}\right.
$$ 

Now, with $q_{12}$ as above, consider the minimizing path $q^{**}:[0,t_1+t_2]\to M$ of $S_{21}(q_0,q_{12},t_2,t_1)$, along with its lift $(q^{**},p^{**}):[0,t_1+t_2]\to T^*M$. 
By Lemma \ref{lem: p p12}, we find: $p^{**}(0)=p^*(0)=p_0$ and $p^{**}(t_1+t_2)=p^*(t_1+t_2)=p_{12}$. Setting $ (q_2, p_2)=F_2^{t_2}(q_0,p_0)$, we see that 
$$
(q^*(t),p^*(t))=\left\{\begin{array}{ll}
F_2^t(q_0,p_0), & 0\le t\le t_2, \\ F_1^{t-t_2}(q_2,p_2), & t_2\le t\le t_1+t_2.
\end{array}\right.
$$ 
In particular, $ (q_{12}, p_{12})=F_1^{t_1}(q_2,p_2)$. This proves the lemma. 
\end{proof}

The situation is summarized in Figure \ref{fig:paths}. The segments  I  and  IV  are integral curves of the flow $F_1^t$, while  II  and  III are integral curves of the flow $F_2^t$. 

\begin{figure}[h]
\begin{tikzpicture}[scale=1]

\draw (-3,-.2) -- (-3,3);
\node at (-3,3.15){ \tiny{p}};
\draw (-3.2,0)--(3,0);
\node at (3.15,0){ \tiny{q}};

\draw plot [smooth] coordinates {(-2,1) (-1,.8) (-.2,.5)};
\draw plot [smooth] coordinates {(-.2,.5)(1.3,1)(2,2)};
\draw plot [smooth] coordinates {(-2,1)(-1.6,1.5)  (.3,2.6)};
\draw plot [smooth] coordinates {(.3,2.6) (1,2.3) (2,2)};

 \node at (-.7,1.8){\tiny { III }};
 \node at (.9,2){\tiny { IV  }};

 \node at (-1.2,1){\tiny {  I }};
 \node at (.8,1){\tiny {  II }};

\fill (-.2,.5)  circle (2pt); \node at (.2,.2){\tiny{ $(q_1,p_1)$ } };
\fill (.3,2.6) circle (2pt); \node at (.7,2.9){\tiny{ $(q_2,p_2)$ } };

\fill (-2,1) circle (2pt); \node at (-2.4,1.3){\tiny{ $ (q_0, p_0) $ } };
\fill (2,2) circle (2pt);\node at (2.4,2.3) {\tiny{ $ (q_{12}, p_{12}) $ } };

\end{tikzpicture}
\caption{The picture in phase space. Curves I and IV are integral curves of the flow $F_1^t$, curves II and III are integral curves of the flow $F_2^t$. Curves I and II build the path $(q^ *,p^*)$, curves III and IV build the path $(q^{**},p^{**})$.} \label{fig:paths}
\end{figure}
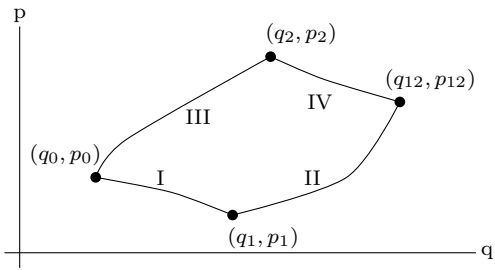 

\begin{lem} If the principal action functions of the Lagrange functions $L_1$, $L_2$ commute then
\begin{align}
H_1(q_0,p_0) = H_1(q_{12},p_{12}), \label{eq:comh 1}\\
H_2(q_0,p_0) = H_2(q_{12},p_{12}). \label{eq:comh 2}
\end{align}
\end{lem}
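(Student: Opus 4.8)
\emph{Proof proposal.} The plan is to differentiate the commutativity relation
$S_{12}(q_0,q_{12},t_1,t_2)=S_{21}(q_0,q_{12},t_2,t_1)$ in the time parameters $t_1$ and $t_2$ and to read off \eqref{eq:comh 1}, \eqref{eq:comh 2} from the generalized Hamilton--Jacobi equations \eqref{eq:genHJE 3}--\eqref{eq:genHJE 4}. First I would record the analogue of those equations for $S_{21}$, obtained by interchanging the labels $1\leftrightarrow 2$: along the minimizing path $q^{**}$ with lift $(q^{**},p^{**})$, and with the derivatives taken in the native time-slots of $S_{21}$,
\begin{align*}
\frac{\partial S_{21}(q_0,q_{12},t_2,t_1)}{\partial t_2} &= -H_2\big(q^{**}(0),p^{**}(0)\big), \\
\frac{\partial S_{21}(q_0,q_{12},t_2,t_1)}{\partial t_1} &= -H_1\big(q^{**}(t_1+t_2),p^{**}(t_1+t_2)\big),
\end{align*}
the point being that the first leg of the $S_{21}$-trajectory, of duration $t_2$, is governed by $L_2$, while the last leg, of duration $t_1$, is governed by $L_1$.

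Next I would view both sides of $S_{12}=S_{21}$ as functions of the same four variables $q_0,q_{12},t_1,t_2$ and differentiate with respect to $t_1$. On the left, \eqref{eq:genHJE 3} gives $-H_1\big(q^*(0),p^*(0)\big)$; by the lemma on commuting flows $(q^*(0),p^*(0))=(q_0,p_0)$, so this equals $-H_1(q_0,p_0)$. On the right, the displayed formula gives $-H_1\big(q^{**}(t_1+t_2),p^{**}(t_1+t_2)\big)$; here $q^{**}(t_1+t_2)=q_{12}$ by the boundary condition and $p^{**}(t_1+t_2)=p^*(t_1+t_2)=p_{12}$ by Lemma~\ref{lem: p p12}, so this equals $-H_1(q_{12},p_{12})$. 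Comparing yields \eqref{eq:comh 1}. Differentiating $S_{12}=S_{21}$ with respect to $t_2$ instead: the left side gives $-H_2\big(q^*(t_1+t_2),p^*(t_1+t_2)\big)=-H_2(q_{12},p_{12})$ by \eqref{eq:genHJE 4} and the endpoint identification above, and the right side gives $-H_2\big(q^{**}(0),p^{**}(0)\big)=-H_2(q_0,p_0)$; comparing yields \eqref{eq:comh 2}.

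The only thing needing care is the bookkeeping: matching each time-derivative on the $S_{21}$ side to the correct leg of the broken trajectory (the $t_2$-leg carries $L_2$, the $t_1$-leg carries $L_1$), and invoking Lemma~\ref{lem: p p12} together with the prescribed boundary data to identify the relevant endpoints of $p^*$ and $p^{**}$ with $p_0$ and $p_{12}$. There is no analytic obstacle beyond what the earlier lemmas already supply; in particular, smoothness of $S_{12}$ and $S_{21}$ in all arguments near the diagonal, for sufficiently small $t_1,t_2>0$, is used throughout.
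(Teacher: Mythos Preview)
Your proposal is correct and follows essentially the same approach as the paper: differentiate the identity $S_{12}=S_{21}$ in $t_1$ and $t_2$ and read off the result from \eqref{eq:genHJE 3}, \eqref{eq:genHJE 4} (and their $S_{21}$-analogues). The paper's proof is more terse, simply stating that the two sides of \eqref{eq:comh 1} are $-\partial S_{12}/\partial t_1$ and $-\partial S_{21}/\partial t_1$; you have spelled out the endpoint identifications via Lemma~\ref{lem: p p12} that the paper leaves implicit in the surrounding setup (one small remark: $(q^*(0),p^*(0))=(q_0,p_0)$ is just the definition of the curve $q^*$ from the preceding discussion, not a consequence of the commuting-flows lemma).
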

\begin{proof}
Indeed, both parts of equation \eqref{eq:comh 1} represent, according to (\ref{eq:genHJE 3}), (\ref{eq:genHJE 4}), partial derivatives
$\partial S_{12}/\partial t_1$ resp. $\partial S_{21}/\partial t_1$. Similarly, both parts of equation \eqref{eq:comh 2} represent partial derivatives
$\partial S_{21}/\partial t_2$ resp. $\partial S_{12}/\partial t_2$. 
\end{proof}

\noindent We now return to Theorem \ref{thm:main cont}.

{\em Proof of Theorem \ref{thm:main cont}.}
 From invariance of the Hamilton function along integral curves of a Hamiltonian flow we get:
\begin{align} \label{eq:midpoints}
H_1(q_1,p_1) = H_1(q_0,p_0), \quad H_2(q_{12},p_{12})=H_2(q_1,p_1),\\
H_2( q_2,p_2) = H_2(q_0,p_0), \quad H_1(q_{12},p_{12})=H_1(q_2,p_2).
\end{align}
Comparing this with \eqref{eq:comh 1}, \eqref{eq:comh 2}, we conclude that
\begin{align} \label{eq:midpoints}
H_1(q_1,p_1) = H_1(q_{12},p_{12}), \quad H_2(q_0,p_0)=H_2(q_1,p_1),\\
H_2(q_2,p_2) = H_2(q_{12},p_{12}), \quad H_1(q_0,p_0)=H_1(q_2,p_2).
\end{align}
Thus, $H_1$ is invariant along the integral curves of the flow of $H_2$ (curves II and III in Figure \ref{fig:paths}), and vice versa, so that
$\{H_1, H_2\} = 0$. \hfill \qed

\section{Commuting discrete Lagrangians}\label{sect: discr results}

We start with two functions $\Lambda_i:M\times M\to\mathbb R$, $i=1,2$, and assume that they define symplectic maps $F_i: T^*M\to T^*M$,
$F_i(q_0,p_0)=(q_i,p_i)$, according to 
\begin{equation}\label{eq: single Lagr map}
   p_0=\frac{\partial\Lambda_i(q_0,q_i)}{\partial q_0},\quad
    p_i=-\frac{\partial\Lambda_i(q_0,q_i)}{\partial q_i}.
\end{equation}

\begin{definition}
We say that the discrete Lagrangians $\Lambda_1$, $\Lambda_2$ commute, if the following two functions coincide identically:
\begin{equation}\label{S12}
S_{12}(q_0,q_{12})=\min_{q_1\in M}\Big(\Lambda_1(q_0,q_1)+\Lambda_2(q_1,q_{12})\Big)
\end{equation} 
and
\begin{equation}\label{S21}
S_{21}(q_0,q_{12})=\min_{q_2\in M}\Big(\Lambda_2(q_0,q_2)+\Lambda_1(q_2,q_{12})\Big).
\end{equation} 
\end{definition}
\begin{theorem}\label{main theorem}
If the Lagrangians $\Lambda_1$, $\Lambda_2$ commute, then the maps $F_1$, $F_2$ commute:
\begin{equation}\label{eq: commute}
F_1\circ F_2=F_2\circ F_1.
\end{equation}
\end{theorem}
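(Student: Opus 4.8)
The plan is to mirror the continuous-time argument, replacing Hamilton--Jacobi equations by the defining relations \eqref{eq: single Lagr map} and replacing ``invariance of $H$ along a flow'' by an explicit comparison of the two factorizations of the endpoint data. First I would record the discrete analogues of \eqref{eq:genHJE 1}--\eqref{eq:genHJE 2}: if $q_1^*$ is the minimizer in \eqref{S12}, then criticality $\partial_{q_1}\Lambda_1(q_0,q_1^*)+\partial_{q_1}\Lambda_2(q_1^*,q_{12})=0$ says precisely that the momentum $p_1^*:=-\partial_{q_1}\Lambda_1(q_0,q_1^*)=\partial_{q_1}\Lambda_2(q_1^*,q_{12})$ (up to sign convention) is well defined, i.e.\ the ``conjugate momentum is continuous at the intermediate node.'' Differentiating $S_{12}$ in $q_0$ and $q_{12}$, the $q_1^*$-terms drop out by criticality, leaving
\begin{align*}
\frac{\partial S_{12}}{\partial q_0}=\frac{\partial\Lambda_1}{\partial q_0}(q_0,q_1^*)=:p_0^*,\qquad
\frac{\partial S_{12}}{\partial q_{12}}=-\frac{\partial\Lambda_2}{\partial q_{12}}(q_1^*,q_{12})=:p_{12}^*,
\end{align*}
and these are exactly the momenta produced by the composition $F_2\circ F_1$ applied to $(q_0,p_0^*)$, by \eqref{eq: single Lagr map}. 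The same computation for $S_{21}$ gives $\partial S_{21}/\partial q_0=p_0^{**}$ and $\partial S_{21}/\partial q_{12}=p_{12}^{**}$, the momenta coming from $F_1\circ F_2$.

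Next I would invoke the hypothesis $S_{12}\equiv S_{21}$ on a neighborhood of the diagonal: differentiating this identity in $q_0$ and in $q_{12}$ gives $p_0^*=p_0^{**}$ and $p_{12}^*=p_{12}^{**}$. Now fix $(q_0,p_0)$ arbitrary (in the relevant domain). Setting $q_{12}$ to be the $q$-component of $F_2\circ F_1(q_0,p_0)$, the minimizer $q_1^*$ of \eqref{S12} reconstructs this orbit: $F_1(q_0,p_0)=(q_1^*,p_1^*)$ and $F_2(q_1^*,p_1^*)=(q_{12},p_{12})$ with $p_0^*=p_0$, $p_{12}^*=p_{12}$. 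Feeding the same $q_0,q_{12}$ into \eqref{S21}, its minimizer $q_2^{**}$ together with $p_0^{**}=p_0$ and the defining equations \eqref{eq: single Lagr map} shows $F_2(q_0,p_0)=(q_2^{**},p_2^{**})$ and $F_1(q_2^{**},p_2^{**})=(q_{12},p_{12}^{**})=(q_{12},p_{12})$. Hence $F_2\circ F_1(q_0,p_0)=F_1\circ F_2(q_0,p_0)$, which is \eqref{eq: commute}. One should check that this holds on an open set large enough to conclude, but since $(q_0,p_0)$ ranged over an open subset of $T^*M$ and both sides are smooth, equality extends by the usual density/continuity argument.

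The main obstacle I expect is a bookkeeping issue rather than a conceptual one: being careful that the minimizer $q_1^*$ in \eqref{S12} genuinely lifts the discrete orbit, i.e.\ that the momentum $p_1^*$ extracted from $\partial_{q_1}\Lambda_1$ agrees with the one extracted from $\partial_{q_1}\Lambda_2$ (this is the content of the discrete counterpart of Lemma \ref{lem:lem1}, and it is exactly the criticality equation), and that the sign conventions in \eqref{eq: single Lagr map} are consistently matched to the derivatives of $S_{12}$ and $S_{21}$. A secondary point is ensuring that for the chosen $q_{12}$ near the diagonal the minimizers exist, are unique, and depend smoothly on the data, so that the differentiations above are legitimate; this is guaranteed for $q_{12}$ close to $q_0$ by the same implicit-function-theorem reasoning that makes $F_i$ well defined near the diagonal, and it is the discrete analogue of the smoothness hypothesis on the principal action function. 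Notably, no invariance-of-energy step is needed here — the discrete picture collapses the four momentum comparisons of the continuous proof into the single pair $p_0^*=p_0^{**}$, $p_{12}^*=p_{12}^{**}$ — which is why the authors find the discrete version ``instructive and enlightening.''
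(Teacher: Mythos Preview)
Your proposal is correct and follows essentially the same route as the paper: the paper also differentiates $S_{12}$ and $S_{21}$ in $q_0$ and $q_{12}$ (packaging this as four ``corner equations'' $(E_0),(E_1),(E_2),(E_{12})$ in a short lemma), uses $S_{12}\equiv S_{21}$ to match the endpoint momenta, and then reads off $F_1\circ F_2=F_2\circ F_1$ exactly as you describe. One minor slip: $\partial S_{12}/\partial q_{12}=+\,\partial\Lambda_2/\partial q_{12}=-p_{12}^*$ (no minus sign on the middle term), but as you anticipated this is pure bookkeeping and does not affect the argument.
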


We can say that functions \eqref{S12}, \eqref{S21} are given by
\begin{align}
& S_{12}=\Lambda_1(q_0,q_1)+\Lambda_2(q_1,q_{12}), \label{S12 full}\\
& S_{21}=\Lambda_2(q_0,q_2)+\Lambda_1(q_2,q_{12}),  \label{S21 full}
\end{align}
provided $q_1,q_2$ are defined as the solutions of the following {\em corner equations}:
\begin{align}
\label{eq: E1}\tag{$E_1$}
&\frac{\partial\Lambda_1(q_0,q_1)}{\partial q_1}+
\frac{\partial\Lambda_2(q_1,q_{12})}{\partial q_1}=0,
\\
\label{eq: E2}\tag{$E_2$}
&\frac{\partial\Lambda_2(q_0,q_2)}{\partial q_2}+
\frac{\partial\Lambda_1(q_2,q_{12})}{\partial q_2}=0.
\end{align}

\begin{lem}\label{E1,E2 --> E,E12}
Let the Lagrangians $\Lambda_1$, $\Lambda_2$ commute. If $q_1$, $q_2$ satisfy corner equations \eqref{eq: E1}, \eqref{eq: E2}, then the following two corner equations are satisfied, as well:
\begin{align}
\label{eq: E}\tag{$E_0$}
&\frac{\partial\Lambda_1(q_0,q_1)}{\partial q_0}-
\frac{\partial\Lambda_2(q_0,q_2)}{\partial q_0}=0,
\\
\label{eq: E12}\tag{$E_{12}$}
&\frac{\partial\Lambda_1(q_2,q_{12})}{\partial q_{12}}-
\frac{\partial\Lambda_2(q_1,q_{12})}{\partial q_{12}} = 0.
\end{align}
\end{lem}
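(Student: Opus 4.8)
The plan is to exploit the commutativity hypothesis $S_{12}(q_0,q_{12})=S_{21}(q_0,q_{12})$ as an identity in $(q_0,q_{12})$ near the diagonal, and differentiate it with respect to $q_0$ and $q_{12}$ separately. The key structural fact, analogous to the Hamilton--Jacobi equations used in the continuous case, is an envelope/criticality argument: since $q_1$ is defined by the corner equation \eqref{eq: E1}, which is precisely the stationarity condition for the minimization in \eqref{S12}, any derivative of $S_{12}$ with respect to $q_0$ or $q_{12}$ picks up no contribution from the implicit dependence of $q_1$ on these variables. Hence from \eqref{S12 full},
\begin{align*}
\frac{\partial S_{12}}{\partial q_0}=\frac{\partial\Lambda_1(q_0,q_1)}{\partial q_0},\qquad
\frac{\partial S_{12}}{\partial q_{12}}=\frac{\partial\Lambda_2(q_1,q_{12})}{\partial q_{12}},
\end{align*}
and symmetrically, using \eqref{eq: E2} and \eqref{S21 full},
\begin{align*}
\frac{\partial S_{21}}{\partial q_0}=\frac{\partial\Lambda_2(q_0,q_2)}{\partial q_0},\qquad
\frac{\partial S_{21}}{\partial q_{12}}=\frac{\partial\Lambda_1(q_2,q_{12})}{\partial q_{12}}.
\end{align*}

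The next step is simply to equate these partial derivatives. Differentiating $S_{12}=S_{21}$ in $q_0$ gives $\partial\Lambda_1(q_0,q_1)/\partial q_0=\partial\Lambda_2(q_0,q_2)/\partial q_0$, which is exactly \eqref{eq: E}; differentiating in $q_{12}$ gives $\partial\Lambda_2(q_1,q_{12})/\partial q_{12}=\partial\Lambda_1(q_2,q_{12})/\partial q_{12}$, which is \eqref{eq: E12}. So the whole lemma reduces to justifying the envelope computation plus a clean bookkeeping of which corner equation kills which implicit term.

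I would first state and prove (or cite, by analogy with the lemma giving \eqref{eq:genHJE 1}--\eqref{eq:genHJE 4} in the continuous setting) the formulas for $\partial S_{12}/\partial q_0$ and $\partial S_{12}/\partial q_{12}$ above, noting that $q_1=q_1(q_0,q_{12})$ is a smooth function by the implicit function theorem applied to \eqref{eq: E1} — this requires the relevant Hessian block to be nondegenerate, which holds near the diagonal for small ``time'' data in the same way the principal action function is smooth there. Then the same for $S_{21}$ with $q_2$ and \eqref{eq: E2}. Finally, I would differentiate the hypothesis $S_{12}(q_0,q_{12})=S_{21}(q_0,q_{12})$ componentwise in $q_0$ and in $q_{12}$ and read off \eqref{eq: E} and \eqref{eq: E12}.

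The main obstacle, such as it is, is not conceptual but a matter of making the envelope argument rigorous: one must know that the minimizer $q_1$ (resp. $q_2$) is an interior, nondegenerate critical point depending smoothly on the parameters, so that the chain-rule terms genuinely vanish rather than merely being stationary at a boundary or degenerate point. Since the paper already works in a neighborhood of the diagonal where analogous smoothness was assumed for the continuous principal action function, I would invoke the same standing assumption (existence, uniqueness, and smoothness of the minimizer, nondegenerate Hessian) and treat the rest as a routine differentiation. No sign subtleties arise because the $q_0$-derivative of $\Lambda_1$ and the $q_{12}$-derivative of $\Lambda_2$ enter $S_{12}$ with a $+$ sign, and likewise for $S_{21}$; the signs in \eqref{eq: E} and \eqref{eq: E12} are exactly those produced by subtracting the two equal gradients.
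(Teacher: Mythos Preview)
Your proposal is correct and follows exactly the same route as the paper: compute $\partial S_{12}/\partial q_0$, $\partial S_{12}/\partial q_{12}$, $\partial S_{21}/\partial q_0$, $\partial S_{21}/\partial q_{12}$ using the corner equations \eqref{eq: E1}, \eqref{eq: E2} to kill the implicit-dependence terms, and then equate them using $S_{12}\equiv S_{21}$. The only difference is that you spell out the envelope/implicit function theorem justification more carefully than the paper does.
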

\begin{proof}
Upon differentiating \eqref{S12 full} and \eqref{S21 full} and taking into account corner equations \eqref{eq: E1}, \eqref{eq: E2}, we find:
$$
\frac{\partial S_{12}}{\partial q_0}=\frac{\partial\Lambda_1(q_0,q_1)}{\partial q_0}, \quad \frac{\partial S_{12}}{\partial q_{12}}=\frac{\partial\Lambda_2(q_1,q_{12})}{\partial q_{12}},
$$
and 
$$
\frac{\partial S_{21}}{\partial q_0}=\frac{\partial\Lambda_2(q_0,q_2)}{\partial q_0}, \quad \frac{\partial S_{21}}{\partial q_{12}}=\frac{\partial\Lambda_1(q_2,q_{12})}{\partial q_{12}}.
$$
The statement now follows from the assumption $S_{12}(q_0,q_{12})\equiv S_{21}(q_0,q_{12})$.
\end{proof}

\begin{figure}[tbp]
\centering
\subfloat[]{\label{Fig: corners1}
\begin{tikzpicture}[auto,scale=0.6,>=stealth',inner sep=2pt]
   \node (x) at (0,0) [circle,fill,label=-135:$q_0$] {};
   \node (x1) at (2,0) [circle,fill,label=-45:$q_1$] {};
   \node (x12) at (2,2) [circle,fill,label=45:$q_{12}$] {};
   \draw (x) to (x1) to (x12);
\end{tikzpicture}
}\qquad
\subfloat[]{\label{Fig: corners2}
\begin{tikzpicture}[auto,scale=0.6,>=stealth',inner sep=2pt]
   \node (x) at (0,0) [circle,fill,label=-135:$q_0$] {};
   \node (x2) at (0,2) [circle,fill,label=135:$q_2$] {};
   \node (x12) at (2,2) [circle,fill,label=45:$q_{12}$] {};
   \draw (x) to (x2) to (x12);
   \end{tikzpicture}
   }\qquad
   \subfloat[]{\label{Fig: corners0}
\begin{tikzpicture}[auto,scale=0.6,>=stealth',inner sep=2pt]
   \node (x) at (0,0) [circle,fill,label=-135:$q_0$] {};
   \node (x1) at (2,0) [circle,fill,label=-45:$q_1$] {};
   \node (x2) at (0,2) [circle,fill,label=135:$q_2$] {};
   \draw (x) to (x1);
   \draw (x) to (x2);
\end{tikzpicture}
}\qquad
\subfloat[]{\label{Fig: corners12}
\begin{tikzpicture}[auto,scale=0.6,>=stealth',inner sep=2pt]
   \node (x1) at (2,0) [circle,fill,label=-45:$q_1$] {};
   \node (x2) at (0,2) [circle,fill,label=135:$q_2$] {};
   \node (x12) at (2,2) [circle,fill,label=45:$q_{12}$] {};
   \draw (x1) to (x12);
   \draw (x2) to (x12);
\end{tikzpicture}
}
\caption{Four corner equations:  \protect\subref{Fig: corners1} $(E_1)$,  \protect\subref{Fig: corners2} $(E_2)$,  \protect\subref{Fig: corners0} $(E_0)$,  \protect\subref{Fig: corners12} $(E_{12})$.}
\label{Fig: corners}
\end{figure}

{\em Proof of Theorem \ref{main theorem}.} Take an arbitrary $(q_0,p_0)\in  T^*M$ and set $(q_1,p_1)=F_1(q_0,p_0)$ and $(q_{12},p_{12})=F_2(q_1,p_1)$. This means:
\begin{eqnarray}
\label{F1}
 p_0=\frac{\partial\Lambda_1(q_0,q_1)}{\partial q_0}, & \; & 
    p_1=-\frac{\partial\Lambda_1(q_0,q_1)}{\partial q_1},\\
 \label{F2 shifted}
 p_1=\frac{\partial\Lambda_2(q_1,q_{12})}{\partial q_1}, & \; &
    p_{12}=-\frac{\partial\Lambda_2(q_1,q_{12})}{\partial q_{12}}.
\end{eqnarray}
In particular, comparing the second equation in \eqref{F1} with the first equation in \eqref{F2 shifted}, we see that equation \eqref{eq: E1} is satisfied.

For the so defined $q_{12}$, solve equation \eqref{eq: E2} for $q_2$, and then set
\begin{equation}\label{aux1}
p_2=-\frac{\partial\Lambda_2(q_0,q_2)}{\partial q_2}=\frac{\partial\Lambda_1(q_2,q_{12})}{\partial q_2}.
\end{equation}
By Lemma \ref{E1,E2 --> E,E12} we conclude that equations \eqref{eq: E} and \eqref{eq: E12} are satisfied. By virtue of the first equation in \eqref{F1} and the second equation in \eqref{F2 shifted}, this implies:
\begin{equation}\label{aux2}
p_0=\frac{\partial\Lambda_2(q_0,q_2)}{\partial q_0}, \quad  p_{12}=-\frac{\partial\Lambda_1(q_2,q_{12})}{\partial q_{12}}.
\end{equation}
From \eqref{aux1}, \eqref{aux2} we conclude: $(q_2,p_2)=F_2(q_0,p_0)$ and $(q_{12},p_{12})=F_1(q_2,p_2)$. \qed

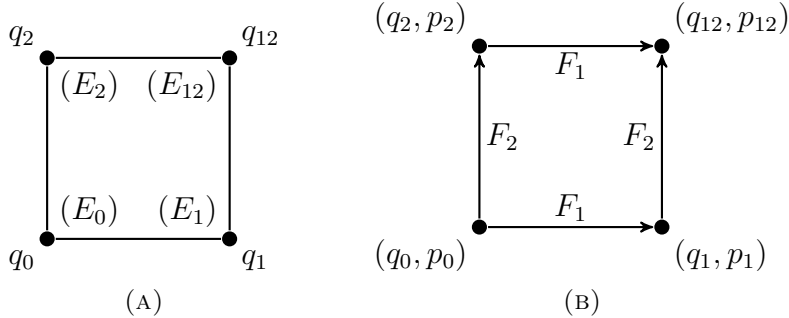
\begin{figure}[htbp]
\centering
\subfloat[]{\label{Fig: consistency1}
\begin{tikzpicture}[auto,scale=1.2,>=stealth',inner sep=2]
   \node (x) at (0,0) [circle,fill,thick,label=-135:$q_0$,label=45:$(E_0)$] {};
   \node (x1) at (2,0) [circle,fill,thick,label=135:$(E_1)$,label=-45:$q_1$] {};
   \node (x2) at (0,2) [circle,fill,thick,label=-45:$(E_{2})$,label=135:$q_2$] {};
   \node (x12) at (2,2) [circle,fill,thick,label=45:$q_{12}$,label=-135:$(E_{12})$] {};
   \draw [thick] (x) to (x1) to (x12);
   \draw [thick] (x) to (x2) to (x12);
\end{tikzpicture}
}\qquad
\subfloat[]{\label{Fig: consistency2}
\begin{tikzpicture}[auto,scale=1.2,>=stealth',inner sep=2]
   \node (x) at (0,0) [circle,fill,thick,{label=-135:$(q_0,p_0)$}] {};
   \node (x1) at (2,0) [circle,fill,thick,{label=-45:$(q_1,p_1)$}] {};
   \node (x2) at (0,2) [circle,fill,thick,{label=135:$(q_2,p_2)$}] {};
   \node (x12) at (2,2) [circle,fill,thick,{label=45:$(q_{12},p_{12})$}] {};
   \draw [thick,->] (x) to node {$F_{1}$} (x1);
   \draw [thick,->] (x) to node [swap] {$F_{2}$} (x2);
   \draw [thick,->] (x2) to node [swap]{$F_{1}$} (x12);
   \draw [thick,->] (x1) to node {$F_{2}$} (x12);
\end{tikzpicture}
}
\caption{ \protect\subref{Fig: consistency1} Corner equations attached to corners of an elementary square.  \protect\subref{Fig: consistency2} Maps $F_{1}$ and $F_{2}$ commute.}
\label{Fig: consistency}
\end{figure}

\textbf{Acknowledgements:} We are very happy to thank Nicolai Reshetikhin for many helpful discussions. This research is supported by the DFG Collaborative Research Center TRR 109 ``Discretization in Geometry and Dynamics''.

\end{document}